\documentclass[11pt,a4paper]{article}
\usepackage[T1]{fontenc}
\usepackage{lmodern}
\usepackage[a4paper,margin=1in,includefoot,footskip=24pt]{geometry}
\usepackage{setspace}
\usepackage{amsmath,amssymb,amsthm}
\usepackage[hidelinks]{hyperref}
\usepackage[protrusion=false]{microtype}
\hypersetup{pdftitle={Complete bipartite embeddings, lattice non-embeddability, and non-Euclidean powers of information distance},pdfauthor={Fengqi Hou}}
\newtheorem{theorem}{Theorem}
\newtheorem{lemma}[theorem]{Lemma}
\newtheorem{corollary}[theorem]{Corollary}
\theoremstyle{remark}
\newtheorem{remark}[theorem]{Remark}
\newcommand{\bits}{\{0,1\}}
\newcommand{\F}{\mathbb F}
\newcommand{\N}{\mathbb N}
\newcommand{\R}{\mathbb R}
\newcommand{\dK}{d_K}
\newcommand{\pref}[2]{#1\mathbin{\upharpoonright}#2}
\newcommand{\cat}{\mathbin{\Vert}}
\newcommand{\ceil}[1]{\left\lceil #1\right\rceil}

\DeclareMathOperator{\bin}{bin}
\title{Complete bipartite embeddings, lattice non-embeddability, and non-Euclidean powers of information distance}
\author{Fengqi Hou\\[0.4em]
\normalsize Beijing Innovation Center of Humanoid Robotics (X-Humanoid)}
\date{}
\begin{document}
\raggedbottom
\maketitle
\begin{abstract}
We encode the vertices of every finite complete bipartite graph as binary strings at each scale. The information distances between these strings approximate the scaled graph distances with an additive error bounded by a constant independent of the scale. We also assign one fixed infinite binary sequence to each vertex; suitable prefixes satisfy the same distance estimate with logarithmic error. This extends Hutter's construction for $K_{3,3}$ and resolves his open problem on scale-embeddings of finite complete bipartite graphs. Combined with his complete bipartite obstruction, these embeddings show that no positive power of information distance can be represented exactly by distances in a real Hilbert space, resolving his question about such representations. For conditional prefix complexity, we further prove that at most $2^{\delta+C}$ strings lie between any two strings with total distance exceeding their mutual distance by at most $\delta$, where $C$ is independent of the endpoints. This bound rules out scale-embeddings of the integer line and every positive-dimensional integer lattice, resolving Hutter's lattice embedding problem for both string and sequence embeddings. Restriction to the integer lattice gives the same conclusion for $(\mathbb R^m,\|\cdot\|_1)$ for every $m\ge1$.
\end{abstract}
\noindent\textbf{Keywords:} Kolmogorov complexity, information distance, metric embeddings, complete bipartite graphs, rough intervals, Hilbert space, computability
\medskip

\section{Introduction}
Information distance measures the information needed to transform one object into another. Bennett, G\'acs, Li, Vit\'anyi and Zurek~\cite{bennett} established its basic properties. We use the maximum of the two conditional prefix complexities, set the distance from a string to itself to zero, and add a fixed constant to the other distances as in Hutter~\cite{hutter}.

Hutter introduced scale-embeddings into information distance, allowing logarithmic additive error~\cite[Definitions 8--9]{hutter}. In a string embedding, each point is encoded as a finite binary string at each scale. In a sequence embedding, each point is assigned one infinite binary sequence. As the scale changes, only the length of the prefix used may change; the sequence itself remains the same. Hutter constructed such an embedding for $K_{3,3}$~\cite[proof of Proposition 26]{hutter}. Open Problem 37 asks whether every finite complete bipartite graph admits a scale-embedding; we establish both the string and sequence versions.

We extend this construction to every finite complete bipartite graph. For finite strings, we encode the vertices so that the information distance between their encodings differs from $s$ times their graph distance by at most a constant independent of $s$. Thus, at scale $s$, the distances are approximately $s$ between opposite sides and $2s$ between distinct vertices on the same side. We also assign each vertex a fixed infinite binary sequence. At each scale $s$, we choose prefixes so that the information distance between them differs from $s$ times the corresponding graph distance by at most $A\log_2(s+2)+C$, where $A$ and $C$ are independent of the scale and the vertex pair. These are the two parts of Theorem~\ref{thm:main}; the second answers Open Problem 37. The construction uses $2\times2$ matrices over a sufficiently large finite field. Its algebraic and coding details are given in Section~\ref{sec:bipartite}.

Hutter's non-Euclidean criterion for powers of complete bipartite graph distances~\cite[Theorem 36]{hutter} connects this result to Open Problem 39. As he observed, an affirmative answer to Open Problem 37 would imply that every positive power of information distance is non-Euclidean. The $K_{3,3}$ construction already gives this conclusion for
\begin{equation}\label{eq:oldthreshold}
 \alpha>\alpha_0:=\log_4(3/2)=0.29248125036\ldots;
\end{equation}
the proof of~\cite[Theorem 33]{hutter} gives this threshold, while the theorem is stated with the simpler bound $\alpha>0.3$. Our embeddings cover the remaining range $0<\alpha\le\alpha_0$ (Corollary~\ref{cor:allpowers}). For each positive exponent, prefixes of a fixed finite family of infinite sequences also give non-Euclidean finite sets at every sufficiently large selected scale (Corollary~\ref{cor:prefix}).

Hutter's Corollary 24 implies that every bounded subset of the integer lattice $(\mathbb Z^m,\|\cdot\|_1)$ admits a scale-embedding into $B_K^*$. Open Problem 29 asks whether the entire lattice admits such an embedding. We answer this negatively for conditional prefix complexity. For a fixed allowed excess $\delta$, we bound the number of strings that lie between two given strings in the following sense: the sum of the two distances through an intermediate string exceeds the distance between the given strings by at most $\delta$. The bound does not depend on those two strings (Lemma~\ref{lem:interval}). It implies that the integer line $\mathbb Z$ cannot scale-embed into $B_K^*$. An embedding of $\mathbb Z^m$ would restrict to such an embedding on a coordinate axis, so no positive-dimensional integer lattice admits one (Corollary~\ref{cor:lattice}). Restriction to the integer lattice also rules out the whole space $(\mathbb R^m,\|\cdot\|_1)$ (Corollary~\ref{cor:realspace}). The optional monotone-complexity convention in Hutter's Assumption 10 is discussed in Remark~\ref{rem:lattice-scope}.

Section~\ref{sec:main} states the main results. Section~\ref{sec:bipartite} constructs the bipartite embeddings, and Section~\ref{sec:powers} proves the consequences for positive powers. Section~\ref{sec:intervals} gives the independent interval-counting argument for integer lattices. Section~\ref{sec:conclusion} concludes the paper. The appendix describes the scope of the Lean formalization. All problem and theorem numbers referring to~\cite{hutter} are those of arXiv:2507.21988v1.

\section{Definitions and main results}\label{sec:main}
\subsection{Information distance and scale-embeddings}
Fix an optimal conditional prefix machine $U$, and put
\[
 K(x\mid y)=\min\{|p|:U(p,y)=x\}.
\]
The machine takes the program $p$ and the condition $y$ as separate inputs. For each fixed $y$, the programs on which it halts form a prefix-free set: no such program is a proper prefix of another.

Following~\cite[Definition 4 and Theorem 5]{hutter}, fix a sufficiently large nonnegative constant $c_{\rm ref}$ and define
\begin{equation}\label{eq:distance}
 \dK(x,y)=
 \begin{cases}
 \max\{K(x\mid y),K(y\mid x)\}+c_{\rm ref},&x\ne y,\\
 0,&x=y.
 \end{cases}
\end{equation}
We choose $c_{\rm ref}$ large enough to make $\dK$ a metric. In the rest of this paper, the constants in our error bounds include $c_{\mathrm{ref}}$. We write $B_K^*=(\bits^*,\dK)$.

Write $\pref{\xi}{t}$ for the prefix of length $t$ of an infinite binary sequence $\xi$. For a metric space $(S,d)$, a string scale-embedding is a family of maps $\phi_s:S\to\bits^*$ whose distances satisfy
\[
 \left|\dK(\phi_s(a),\phi_s(b))-s\,d(a,b)\right|
 \le A\log_2(s+2)+C
\]
for all $a,b\in S$ and all positive integer scales $s$, with constants independent of $s,a,b$. A sequence scale-embedding uses maps of the form $\phi_s(a)=\pref{\xi_a}{L_a(s)}$, where $\xi_a$ is fixed for each point and $L_a(s)$ is nondecreasing in $s$. In the prefix-complexity convention used here, Hutter's Assumption 10 also requires at most polynomial growth of the selected lengths~\cite[Definitions 8--9 and Assumption 10]{hutter}. Replacing $\log_2(s+2)$ by $\log_2 s$ changes only the additive constant for $s\ge1$. Our positive constructions have a common length $L(s)=2s+O(1)$. Our nonexistence results remain valid if the distance estimate is required only at sufficiently large scales and the prefix lengths are arbitrary functions of the scale and the point.

\subsection{Complete bipartite graph embeddings}

Let $m,n$ be positive integers and let $G=K_{m,n}$ have vertices
\[
 V(G)=\{L_1,\ldots,L_m\}\sqcup\{R_1,\ldots,R_n\}.
\]
Its shortest-path distance $d_G$ is zero on the diagonal, one between the sides, and two between distinct vertices on the same side. Both sides are required to be nonempty. Graph vertices are denoted by $a,b$; the direction vectors $u_i,v_j$ used in the construction will be defined in Section~\ref{sec:bipartite}.

\begin{theorem}\label{thm:main}
For every pair of positive integers $m,n$, there are integers $r,B\ge1$ and constants $C_{\rm fin},C_{\rm seq}\ge0$ with the following properties. Set
\begin{equation}\label{eq:length}
 L(s)=B+2r\ceil{s/r}\qquad(s\in\N,\ s\ge1).
\end{equation}
\begin{enumerate}
\item For every integer $s\ge1$, there is an injection $\phi_s:V(G)\to\bits^{L(s)}$ such that
\begin{equation}\label{eq:finite}
 \left|\dK(\phi_s(a),\phi_s(b))-s d_G(a,b)\right|\le C_{\rm fin}
\end{equation}
for all vertices $a,b$.
\item There are fixed infinite binary sequences $(\xi_a)_{a\in V(G)}$ such that $\psi_s(a)=\pref{\xi_a}{L(s)}$ is injective for every integer $s\ge1$ and
\begin{equation}\label{eq:nested}
 \left|\dK(\psi_s(a),\psi_s(b))-s d_G(a,b)\right|
 \le 2\log_2(s+2)+C_{\rm seq}
\end{equation}
for all vertices $a,b$ and all such $s$.
\end{enumerate}
The common length is nondecreasing, tends to infinity, and satisfies
\begin{equation}\label{eq:lengthbounds}
 B+2s\le L(s)<B+2s+2r.
\end{equation}
\end{theorem}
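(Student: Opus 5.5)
We plan to work over a finite field $\F=\F_{2^k}$ with $q=2^k$ large enough that we can choose pairwise independent directions. Concretely, we pick vectors $u_1,\dots,u_m\in\F^2$ and $v_1,\dots,v_n\in\F^2$ with three properties. Any two distinct $u_i$ are linearly independent. Any two distinct $v_j$ are linearly independent. Each $u_i$ is linearly independent of each $v_j$. It suffices to take $m+n$ pairwise non-proportional vectors, which is possible once $q+1\ge m+n$.

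At scale $s$ we set $t=\ceil{s/r}$ and work over the extension field $\F_{q^t}$, or equivalently with blocks of $t$ symbols. Let $X\in\F_{q^t}^{2\times2}$ be a random matrix, of Kolmogorov complexity $\approx 4tk$ bits. The intended encodings are $\phi_s(L_i)=(Xu_i,\text{tag}_i)$ and $\phi_s(R_j)=(X^\top v_j,\ldots)$.

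A cleaner alternative uses a random $2\times2$ matrix $M$ of total information $\approx 2s$. Each $L_i$ gets the column-type projection $Mu_i\in\F_{q^t}^2$, carrying about $2s$ bits, and each $R_j$ gets the row-type projection $v_j^\top M$. The distances then come out as follows.
\begin{itemize}
\item For two left vertices, $Mu_i$ and $Mu_{i'}$ together determine $M$, since the $u$'s are independent. The two projections are therefore independent, and the distance is $\approx 2s$.
\item Between $L_i$ and $R_j$, the scalar $v_j^\top M u_i$ is shared. So $K(Mu_i\mid v_j^\top M)\approx 2s-s/2$, which does not give $s$.
\end{itemize}
We fix the bit counts by scaling so that each encoding has $2\times$ the shared amount. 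With $M$ uniform over $\F^{2\times2}$ of total $4\ell$ bits (each entry $\ell$ bits), $Mu_i$ has $2\ell$ bits and $v^\top Mu$ has $\ell$ bits. Then $d(L,R)\approx 2\ell-\ell=\ell$, while $d(L,L')\approx 4\ell-2\ell=2\ell$. This matches $d_G$ with $\ell=s$, and the length is $2s+O(1)$, consistent with \eqref{eq:lengthbounds}. We choose $r=k$ so that $\ell$ is a multiple of the symbol size, and the padding constant $B$ carries a fixed index.

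For part 1, we pick $M$ algorithmically random given $s$: $K(M\mid s)\ge 4\ell-O(1)$, which exists by counting. We then verify each conditional complexity up to $O(1)$ using symmetry of information. Each conditional complexity reduces to a linear-algebra statement about which linear functionals of $M$ one encoding determines given the other. The $O(1)$ error (rather than $O(\log s)$) is the delicate point. We plan to get it by noting that $\ell$ is recoverable from the length of either string, and that the decoding maps are fixed linear maps. Upper bounds come from explicit programs of the form ``read the missing coordinates.'' Lower bounds come from a counting argument applied to the random $M$, conditioned on $s$.

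For part 2, we build one infinite random object, namely an infinite sequence of independent random field symbols for each matrix entry. We encode $\xi_a$ as the interleaving of the projected coordinates, block by block, so that the prefix of length $L(s)$ equals $\phi_s$ computed for the truncation $M_\ell$. Linearity over $\F_q$ applied coordinatewise makes projection commute with truncation. Taking $M$ to be Martin-L\"of random, every prefix satisfies $K(M_\ell)\ge 4\ell-K(\ell)-O(1)$. This is where the $2\log_2(s+2)$ loss enters.

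The main obstacle is the lower bounds with constant error in part 1, especially $K(Mu_i\mid v_j^\top M)\ge \ell-O(1)$. We first try the following: given $v_j^\top M$ and a short program for $Mu_i$, one could describe $M$ with fewer than $4\ell$ bits. This contradicts randomness, once we check that $(Mu_i,v_j^\top M)$ span exactly a $3\ell$-bit set of functionals, which uses the independence of $u_i$ and $v_j$.
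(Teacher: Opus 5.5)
Your proposal is correct and follows essentially the paper's proof. The shared ingredients are:
\begin{itemize}
\item random $2\times2$ matrix blocks over $\mathbb F_{2^r}$, with a one-sided projection for each side (your $Mu_i$ and $v_j^\top M$ versus the paper's $u_i^\top M$ and $Mv_j$);
\item upper bounds obtained by supplying the missing field element for each block;
\item lower bounds obtained by describing the matrix data from the conditioning string, a shortest program and the remaining $\ell$ bits;
\item an incompressible block for part 1, a Martin--L\"of random source with the $K(\ell)$ loss for part 2, and fixed tags for injectivity.
\end{itemize}

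Three small corrections. First, the cross-side independence you impose is unnecessary, and it is not what makes the opposite-side count work: for all nonzero $u,v$, the functionals of $Mu$ and $v^\top M$ satisfy exactly one relation, through $v^\top Mu$. So $q+1\ge\max(m,n)$ suffices, as in the paper. Second, the bound you need in part 2 is on $K(M_\ell\mid\ell)$, not $K(M_\ell)$. Third, the tag must sit at the front of $\xi_a$, not after the projections as in your first sketch, so that every prefix of length $L(s)$ carries it.
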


Theorem~\ref{thm:main} answers Hutter's Open Problem 37 affirmatively, providing both string and sequence scale-embeddings.

The constants may depend on the fixed graph, the reference machine and the algebraic and coding choices, as well as the chosen infinite source in the sequence construction. They are independent of the scale and the vertex pair. Both parts of Theorem~\ref{thm:main} assert existence: no algorithm for selecting the finite data at each scale or generating the infinite source is required. The first part satisfies a constant error bound, which is stronger than the logarithmic bound allowed in~\cite[Definition 8]{hutter}. The logarithmic error and prefix lengths in the second part satisfy~\cite[Definition 9 and Assumption 10]{hutter}.

\subsection{Positive powers of information distance and Hilbert-space representations}
The embeddings above, combined with Hutter's criterion, give the following two consequences. The proofs appear in Section~\ref{sec:powers}.

\begin{corollary}\label{cor:allpowers}
For every real $\alpha>0$, there is no map $F$ from the set $\bits^*$ of all finite binary strings into any real Hilbert space $H$ such that
\begin{equation}\label{eq:representation}
 \|F(x)-F(y)\|_H=\dK(x,y)^\alpha
 \qquad\text{for all }x,y\in\bits^*.
\end{equation}
Here $\|\cdot\|_H$ is the norm induced by the inner product on $H$. Thus Open Problem 39 of~\cite{hutter} has a negative answer.
\end{corollary}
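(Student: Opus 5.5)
We argue by contradiction and reduce to a finite statement. If $F$ as in \eqref{eq:representation} existed, then for every finite set $S\subset\bits^*$ the restriction of $F$ to $S$ would be an isometric embedding of the finite metric space $(S,\dK^\alpha)$ into $H$. Such a space is Euclidean exactly when Schoenberg's criterion holds: for all real weights $c_x$ with $\sum_x c_x=0$ we need $\sum_{x,y} c_xc_y\,\dK(x,y)^{2\alpha}\le 0$. It therefore suffices to produce, for each $\alpha>0$, one finite set of strings on which this quadratic form is positive for some weight vector summing to zero. We take the set to be the image $\phi_s(V(K_{m,n}))$ from Theorem~\ref{thm:main}(1), with $m,n$ depending on $\alpha$ and $s$ large.

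The first step is the graph-level obstruction, Hutter's criterion \cite[Theorem 36]{hutter}. We need it in the following quantitative form. For every $\alpha>0$ there are $m,n$ such that the form $Q(c)=\sum_{a,b}c_ac_b\,d_G(a,b)^{2\alpha}$ on $G=K_{m,n}$ is positive at some $c$ with $\sum c_a=0$. An explicit witness: put weight $1/m$ on each left vertex and $-1/n$ on each right vertex. Then
\[
Q=2^{2\alpha}\Bigl(\frac{m-1}{m}+\frac{n-1}{n}\Bigr)-2,
\]
which tends to $2^{2\alpha+1}-2>0$ as $m,n\to\infty$. We fix $m=n$ large enough that $Q(c)=\eta>0$. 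This choice is where all small $\alpha$, including $0<\alpha\le\alpha_0$, are handled: the $K_{3,3}$ embedding alone only gives $\alpha>\alpha_0$.

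The second step is a perturbation argument at scale $s$. By \eqref{eq:finite} and injectivity of $\phi_s$, for $a\ne b$ we have $\dK(\phi_s(a),\phi_s(b))=s\,d_G(a,b)+e_{ab}$ with $|e_{ab}|\le C_{\rm fin}$. Dividing by $s$ gives
\[
\bigl(\dK(\phi_s(a),\phi_s(b))/s\bigr)^{2\alpha}=d_G(a,b)^{2\alpha}+O(1/s),
\]
uniformly in $a,b$. This holds because $d_G\in\{1,2\}$ off the diagonal and $t\mapsto t^{2\alpha}$ is Lipschitz near $[1,2]$; on the diagonal both sides are $0$. Consequently
\[
s^{-2\alpha}\sum_{a,b}c_ac_b\,\dK(\phi_s(a),\phi_s(b))^{2\alpha}=\eta+O(1/s),
\]
which is positive for large $s$. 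Scaling by $s^{2\alpha}$ does not change the sign, so Schoenberg's criterion fails on this finite set of strings, contradicting the existence of $F$.

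The main obstacle is the first step: we must make sure the graph-level non-negativity fails for every $\alpha>0$, not just above a threshold. I would verify this directly with the explicit balanced witness above rather than citing the exact hypotheses of Hutter's Theorem 36. The constant-error embedding makes the perturbation step routine. Logarithmic error would also suffice, since $\log s/s\to0$.
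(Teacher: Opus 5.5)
Your proposal is correct and follows essentially the paper's proof. Both use the balanced $K_{N,N}$ weights (your $1/m,-1/n$ with $m=n=N$) giving $2(1-1/N)4^\alpha-2>0$ for large $N$, the constant-error embedding of Theorem~\ref{thm:main}(1), and the zero-sum Hilbert-space identity (only the necessary direction of Schoenberg's criterion) applied at scale $s$ after normalizing by $s$; the only cosmetic difference is that you make the convergence quantitative with a Lipschitz bound $O(1/s)$ where the paper just invokes continuity.
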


\begin{corollary}\label{cor:prefix}
For each real $\alpha>0$, there are an integer $N$, a fixed family of infinite binary sequences $(\xi_a)_{a\in V(K_{N,N})}$, a common length function $L(s)=B+2r\ceil{s/r}=2s+O(1)$, and an integer $s_0\ge1$ such that, for every integer $s\ge s_0$, the finite set
\[
 X_s=\{\pref{\xi_a}{L(s)}:a\in V(K_{N,N})\}
\]
with pairwise distances $\dK^\alpha$ has no isometric representation in a real Hilbert space.
\end{corollary}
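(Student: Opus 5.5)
We prove Corollary~\ref{cor:prefix} by combining part~2 of Theorem~\ref{thm:main} with a quantitative form of Hutter's non-Euclidean criterion for powers of $K_{N,N}$ distances~\cite[Theorem 36]{hutter}. The underlying fact is this. For fixed $\alpha>0$, once $N$ is large enough, the power $d_G^\alpha$ on $G=K_{N,N}$ violates a negative-type (Schoenberg) inequality with \emph{strict} slack.

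Concretely, take weights $c_a=+1$ on the left side and $c_a=-1$ on the right side, so that $\sum_a c_a=0$. Then
\[
 Q(D):=\sum_{a,b}c_ac_b\,D(a,b)
\]
must be $\le0$ for every matrix $D$ of squared Euclidean distances. Apply this to $D=\dK^{2\alpha}$ restricted to $X_s$. For the graph metric scaled by $s$ it reads
\[
 Q\bigl((s\,d_G)^{2\alpha}\bigr)=s^{2\alpha}\bigl(2N(N-1)2^{2\alpha}-2N^2\bigr),
\]
which is positive exactly when $N>1/(1-4^{-\alpha})$. We fix such an $N$. This recovers the threshold $\alpha_0$ and shows that every $\alpha>0$ is covered for $N$ large. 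If I have misremembered the exact form of Hutter's test vector, I would instead use whatever explicit certificate his Theorem~36 provides; the argument only needs a fixed zero-sum vector $c$ with $Q((s\,d_G)^{2\alpha})=\kappa s^{2\alpha}$ and $\kappa>0$.

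Next, apply part~2 of Theorem~\ref{thm:main} to $K_{N,N}$. This gives sequences $\xi_a$, the length function $L(s)$, and injectivity of $a\mapsto\pref{\xi_a}{L(s)}$, so $X_s$ has exactly $2N$ distinct points indexed by $V(G)$. Write $\dK(\psi_s(a),\psi_s(b))=s\,d_G(a,b)+e_{ab}$ with $|e_{ab}|\le E_s:=2\log_2(s+2)+C_{\rm seq}$. For $a\ne b$ we have $d_G(a,b)\ge1$, so by the mean value theorem
\[
 \bigl|(s\,d_G+e_{ab})^{2\alpha}-(s\,d_G)^{2\alpha}\bigr|\le 2\alpha\,E_s\,(2s+E_s)^{2\alpha-1}\max\bigl\{1,(s-E_s)^{2\alpha-1}/(2s+E_s)^{2\alpha-1}\bigr\}.
\]
This bound is $O\bigl(s^{2\alpha-1}\log s\bigr)$ once $s$ is large enough that $E_s\le s/2$. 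Diagonal terms are zero on both sides.

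Summing over the $4N^2$ pairs, $Q(\dK^{2\alpha}|_{X_s})\ge\kappa s^{2\alpha}-O_N\bigl(s^{2\alpha-1}\log s\bigr)$. This is positive for all $s\ge s_0$, with $s_0$ depending on $\alpha$, $N$ and $C_{\rm seq}$. Positivity contradicts conditional negative definiteness, which every squared Hilbert distance matrix must satisfy; this gives the corollary. The main obstacle is handling the relative error cleanly, uniformly in $\alpha$ both below and above $1/2$, where the exponent $2\alpha-1$ changes sign. It is handled by restricting to $s$ with $E_s\le s/2$ so that all distances between distinct vertices lie in $[s/2,3s]$. Corollary~\ref{cor:allpowers} then follows, since an isometric $F$ on $\bits^*$ would restrict to an isometric representation of $X_{s_0}$.
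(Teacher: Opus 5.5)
Your proposal is correct and follows essentially the paper's route. You use the same zero-sum $\pm1$ weights on $K_{N,N}$ (the paper's $\pm1/N$ weights multiplied by $N$), the same choice $N>1/(1-4^{-\alpha})$, the fixed sequences of Theorem~\ref{thm:main}(2), and the Hilbert-space inequality of Lemma~\ref{lem:hilbert}; the only difference is that you bound the perturbation explicitly by a mean-value estimate $O(s^{2\alpha-1}\log s)$ against the main term $\kappa s^{2\alpha}$, whereas the paper divides distances by $s$ and uses continuity of $t\mapsto t^{2\alpha}$ to get $Q_s\to S_N(\alpha)>0$.
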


\subsection{Integer lattices}
We next show that $(\mathbb Z^m,\|\cdot\|_1)$ cannot scale-embed into $B_K^*$ for any $m\ge1$. The proof is given in Section~\ref{sec:intervals}.

\begin{corollary}\label{cor:lattice}
For every integer $m\ge1$, $(\mathbb Z^m,\|\cdot\|_1)$ has neither a string scale-embedding nor a sequence scale-embedding into the distance~\eqref{eq:distance} under Definitions 8--9 of~\cite{hutter}. Thus Open Problem 29 has a negative answer for the prefix-complexity distance.
\end{corollary}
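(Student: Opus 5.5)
We reduce to the case $m=1$ and then use the interval-counting bound of Lemma~\ref{lem:interval}: for every $\delta\ge0$, at most $2^{\delta+C}$ strings $z$ satisfy $\dK(x,z)+\dK(z,y)\le\dK(x,y)+\delta$, with $C$ independent of $x,y$.

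\emph{Reduction.} A string or sequence scale-embedding of $\mathbb Z^m$ restricts to the axis $\{(k,0,\ldots,0):k\in\mathbb Z\}$, which is isometric to $(\mathbb Z,|\cdot|)$. The distance estimate is inherited unchanged. In the sequence case, the fixed sequences and nondecreasing prefix lengths are inherited too. Every sequence embedding is in particular a family of maps $\phi_s$ satisfying the string estimate. So it suffices to rule out maps $\phi_s:\mathbb Z\to\bits^*$ with
\[
\bigl|\dK(\phi_s(a),\phi_s(b))-s|a-b|\bigr|\le A\log_2(s+2)+C'
\]
for all $a,b$ and all large $s$. As the paper notes, we may ignore the length constraints and use only sufficiently large scales.

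\emph{Counting argument.} Fix a large $s$ and write $E=A\log_2(s+2)+C'$. For $s>E$, the map $\phi_s$ is injective on $\mathbb Z$, since distinct integers get distance at least $s-E>0$. Take the endpoints $x=\phi_s(0)$ and $y=\phi_s(N)$ for an integer $N$ to be chosen. For each $0\le k\le N$, the string $z_k=\phi_s(k)$ satisfies
\[
\dK(x,z_k)+\dK(z_k,y)\le sk+s(N-k)+2E\le\dK(x,y)+3E .
\]
Hence the $N+1$ distinct strings $z_k$ all lie in the rough interval with excess $\delta=3E$. Lemma~\ref{lem:interval} then gives
\[
N+1\le 2^{3E+C}=2^{3C'+C}(s+2)^{3A}.
\]
The embedding estimate must hold for every pair, and $N$ is arbitrary. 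Choosing $N$ larger than this polynomial bound in $s$ therefore gives a contradiction at every large scale $s$. So no scale-embedding of $\mathbb Z$ exists, and hence none of $\mathbb Z^m$. This answers Open Problem 29 negatively for the prefix-complexity distance.

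\emph{Main obstacle.} The substance lies entirely in Lemma~\ref{lem:interval}, which is assumed here. Its key feature is that the constant $C$ is uniform in the endpoints, and the only delicate point is that the excess $\delta$ grows only logarithmically in $s$. One might worry that the argument also needs $N$ to be tied to $s$; it does not, because the count bound is independent of $\dK(x,y)$. I would also check that the $c_{\rm ref}$ convention and the value $\dK(z,z)=0$ cause no trouble when $z_k$ equals an endpoint ($k=0$ or $k=N$). They do not, since the inequality above holds trivially in those cases.
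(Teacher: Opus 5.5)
Your proof is correct and is essentially the paper's argument: restrict to a coordinate axis and fix one scale. Then place $\phi_s(0),\ldots,\phi_s(N)$ inside the rough interval $I_{3E}(\phi_s(0),\phi_s(N))$, whose size Lemma~\ref{lem:interval} bounds independently of $N$. The only difference is how you count images. You force injectivity by taking $s>E$, which is the sole use of the logarithmic growth of the error. The paper's Theorem~\ref{thm:noray} instead bounds each image's multiplicity by $\lfloor E/s\rfloor+1$, so it works at any single positive scale with any finite error.
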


\section{Complete bipartite graph embeddings}\label{sec:bipartite}
We prove Theorem~\ref{thm:main} in four steps, corresponding to the four subsections below. First, we use matrix products to construct a binary string for each vertex. Second, we bound the length of a shortest prefix program that outputs one vertex's string when given another vertex's string. Third, we choose a suitable finite sequence of matrices at each scale and add fixed vertex labels to obtain the finite-string embedding. Fourth, we fix a suitable infinite sequence of matrices and prove that the resulting vertex sequences, with the same fixed labels, satisfy the required distance estimates for their prefixes.

\subsection{The finite-field matrix construction}
Choose $r\ge1$ such that $q=2^r$ satisfies $q+1\ge\max(m,n)$, and fix the field $\F_q$. Since $\F_q$ has $2^r$ elements, each element can be encoded by a distinct binary string of length $r$. We fix a basis of $\F_q$ over $\F_2$ and use the $r$ binary coordinates in this basis as the encoding.

Assign each left vertex $L_i$ a nonzero column vector $u_i\in\F_q^2$, and each right vertex $R_j$ a nonzero column vector $v_j\in\F_q^2$. We regard nonzero scalar multiples as representing the same direction. On each side, no two chosen vectors are scalar multiples of one another. Vectors on opposite sides may coincide. There are $q+1$ available directions, represented by
\[
 (1,\lambda)^T\quad(\lambda\in\F_q),\qquad (0,1)^T,
\]
so the chosen field has enough directions for both sides. These direction vectors are fixed parameters in the matrix multiplications and remain unchanged as the scale changes.

A matrix $M\in\F_q^{2\times2}$ has four field elements, which we encode in row order as a $4r$-bit string. For each vertex, define the corresponding matrix product by
\begin{equation}\label{eq:observations}
 X_i(M)=u_i^TM,\qquad Y_j(M)=Mv_j.
\end{equation}
Here $u_i^TM$ is a row vector and $Mv_j$ is a column vector. Each product has two field elements and is therefore encoded as a $2r$-bit string. Fix a finite sequence of matrices $M_1,\ldots,M_\ell$. For each vertex, we encode the products obtained from $M_1,\ldots,M_\ell$ as binary strings and concatenate them in that order. The following lemma gives the relations between the products that we will use to bound the conditional complexity of these strings.

\begin{lemma}\label{lem:recovery}
For the fixed direction vectors chosen above, two products $u_i^TM$ and $u_j^TM$ with $i\ne j$ determine $M$; the same holds for $Mv_i$ and $Mv_j$. For vertices on opposite sides, either product $u_i^TM$ or $Mv_j$ can be computed from the other using one suitable additional field element. Given both products, one further suitable field element suffices to recover $M$.
\end{lemma}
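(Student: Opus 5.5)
The proof is plain linear algebra over $\F_q$. It is organized around the bilinear identity
\[
 u_i^T(Mv_j)=(u_i^TM)v_j,
\]
which links the two kinds of product, together with the fact that two non-proportional vectors in $\F_q^2$ form a basis.

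\emph{Same-side recovery.} Suppose $i\ne j$ on the left side. Form the invertible $2\times2$ matrix $U$ whose rows are $u_i^T$ and $u_j^T$; it is invertible because the two directions are distinct. The two products are exactly the rows of $UM$, so $M=U^{-1}(UM)$. On the right side, let $V$ be the matrix with columns $v_i,v_j$. The products are the columns of $MV$, so $M=(MV)V^{-1}$.

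\emph{Opposite sides.} Let $x=u_i^TM$ and $y=Mv_j$. In each direction the scalar $\beta=u_i^Ty=xv_j$ is forced, so the missing information is exactly one extra coordinate.

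\begin{itemize}
\item To get $y$ from $x$: complete $u_i$ to a basis $u_i,w$ of $\F_q^2$, for instance $w=(1,0)^T$ or $(0,1)^T$, whichever is not proportional to $u_i$. Given $x$, compute $u_i^Ty=xv_j$. The additional field element is $w^Ty=w^TMv_j$. Then $y$ is the unique vector with these two inner products against the basis $u_i,w$.
\item To get $x$ from $y$: symmetrically, complete $v_j$ to a basis $v_j,z$. We know $xv_j=u_i^Ty$, and the additional element is $xz=u_i^TMz$. Then $x$ is determined because $v_j,z$ form a basis.
\end{itemize}

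\emph{Recovering $M$ from both products.} Given $x$ and $y$, change bases on both sides. Let $P$ have rows $u_i^T,w^T$ and $Q$ have columns $v_j,z$, both invertible. Then $N=PMQ$ has entries
\[
 N_{11}=u_i^TMv_j,\quad N_{12}=u_i^TMz,\quad N_{21}=w^TMv_j,\quad N_{22}=w^TMz.
\]
The first row of $N$ is $xQ$ and the first column is $Py$, so $x$ and $y$ together supply three of the four entries, with $N_{11}$ counted consistently. The single unknown is $N_{22}=w^TMz$; this is the "one further suitable field element". Knowing it, we obtain $M=P^{-1}NQ^{-1}$.

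All of these maps are computable from the fixed directions, which are part of the fixed finite data. The only care needed is to choose the completing vectors $w,z$ in a fixed, computable way depending only on the vertex directions. This keeps the later complexity bounds free of extra overhead. I expect no real obstacle beyond that; the consistency of $N_{11}$ from the two sources is automatic.
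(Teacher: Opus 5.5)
Your proof is correct and is essentially the paper's own argument. The same-side case inverts the matrix of two non-proportional directions. The opposite-side case passes to $N=PMQ$ with $P$ having rows $u_i^T,w^T$ and $Q$ having columns $v_j,z$: the two products give the first row and first column, the extra element for each conversion is $N_{12}$ or $N_{21}$, and $N_{22}$ completes the recovery of $M$. Your choice of $w,z$ among the standard basis vectors is a harmless concrete instance of the paper's fixed completions $u',v'$.
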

\begin{proof}
For two distinct vertices on the left side, their direction vectors $u_i$ and $u_j$ are not proportional. Hence the matrix with rows $u_i^T,u_j^T$ is invertible. Given the two products $X_i(M)$ and $X_j(M)$, we recover $M$ using the fixed direction vectors $u_i,u_j$:
\[
 M=\begin{pmatrix}u_i^T\\u_j^T\end{pmatrix}^{-1}
 \begin{pmatrix}X_i(M)\\X_j(M)\end{pmatrix}
 .
\]
The same argument on the right uses the invertible matrix with columns $v_i,v_j$.

Fix opposite-side directions $u,v$. Choose a vector $u'$ that is not proportional to $u$, and a vector $v'$ that is not proportional to $v$. Then $u,u'$ and $v,v'$ each form a basis of $\F_q^2$. Put
\[
 P=\begin{pmatrix}u^T\\(u')^T\end{pmatrix},\qquad
 Q=\begin{pmatrix}v&v'\end{pmatrix},\qquad \widetilde M=PMQ.
\]
Writing $X=u^TM$ and $Y=Mv$, we have
\[
 XQ=(\widetilde M_{11},\widetilde M_{12}),\qquad
 PY=(\widetilde M_{11},\widetilde M_{21})^T.
\]
Given $Y$, compute $PY$ and hence $\widetilde M_{11}$. Supplying the value of $\widetilde M_{12}$ then gives $X=(\widetilde M_{11},\widetilde M_{12})Q^{-1}$. In the other direction, $X$ determines $\widetilde M_{11}$ and the supplied value of $\widetilde M_{21}$ gives $Y=P^{-1}(\widetilde M_{11},\widetilde M_{21})^T$. Thus only one additional field element is supplied in either direction. Given both $X$ and $Y$, and the value of $\widetilde M_{22}$, all four entries of $\widetilde M$ are known, and $M=P^{-1}\widetilde M Q^{-1}$. The directions and the matrices $P,Q$ are fixed for this vertex pair; we can compute $X$ from $Y$ and the supplied value of $\widetilde M_{12}$, and compute $Y$ from $X$ and the supplied value of $\widetilde M_{21}$. Neither conversion requires the full matrix $M$. This calculation is valid also when $u^Tv=0$.
\end{proof}

For $\ell\ge1$, let $W_\ell$ be the concatenation of the $4r$-bit encodings of $M_1,\ldots,M_\ell$. At a left vertex $L_i$, concatenate the $2r$-bit encodings of $u_i^TM_1,\ldots,u_i^TM_\ell$ to form $x_{L_i}$. At a right vertex $R_j$, use $M_1v_j,\ldots,M_\ell v_j$ in the same way to form $x_{R_j}$. Thus $W_\ell$ encodes all the matrices, while $x_a$ is the string constructed for vertex $a$, before adding its fixed label. With $T=r\ell$, we have
\begin{equation}\label{eq:rawlength}
 |W_\ell|=4T,\qquad |x_a|=2T.
\end{equation}
All transformations are computable once the finite field, directions and encoding have been fixed. As the scale increases, we use more matrices; each matrix remains $2\times2$ and each direction vector has two components.

\begin{samepage}
\subsection{Conditional complexity estimates}

\begin{lemma}\label{lem:complexity}
There is a constant $C_0\ge0$ with the following property. Let $\ell\ge1$ and $\Delta\ge0$, and let $W_\ell$ be a matrix encoding satisfying
\begin{equation}\label{eq:deficiency}
 K(W_\ell\mid\ell)\ge4r\ell-\Delta.
\end{equation}
Set $T=r\ell$. Then, for any two distinct vertices $a,b$, the corresponding strings $x_a,x_b$ satisfy
\begin{equation}\label{eq:complexity}
 T d_G(a,b)-\Delta-C_0
 \le K(x_a\mid x_b)\le T d_G(a,b)+C_0.
\end{equation}
\end{lemma}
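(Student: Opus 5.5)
The argument has two halves: upper bounds by explicit programs built from Lemma~\ref{lem:recovery}, and lower bounds by a counting/symmetry-of-information argument using the near-randomness \eqref{eq:deficiency} of $W_\ell$. Throughout, the constant $C_0$ may depend on $m,n$, the field, the directions and $U$, but not on $\ell$ or $\Delta$. A useful preliminary remark is that any $x_a$ determines $\ell=|x_a|/(2r)$. Hence conditioning on $x_b$ already gives $\ell$ for free, and the lengths of the strings we supply never need to be encoded.

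\textbf{Upper bounds.} For opposite sides, $d_G=1$. Lemma~\ref{lem:recovery} shows that, given $x_b$, supplying one field element per matrix (the entry $\widetilde M_{12}$ or $\widetilde M_{21}$) produces $x_a$. The program is a constant-size description of the pair $(a,b)$ and of the conversion routine, followed by a literal string of $r\ell=T$ bits. Its length is read off from $|x_b|$, so no length prefix is needed, and $K(x_a\mid x_b)\le T+O(1)$. For the same side, $d_G=2$, and $u_i^TM$ is a row vector with two field entries. Since $x_a$ has length $2T$ and its length is known from $x_b$, the program just copies a literal of length $2T$, giving $K(x_a\mid x_b)\le 2T+O(1)$. (The maximum over finitely many vertex pairs of these constants is absorbed into $C_0$.)

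\textbf{Lower bounds.} The key is that $x_a$ and $x_b$ together essentially determine $W_\ell$. For the same side, Lemma~\ref{lem:recovery} gives this exactly, so
\[
K(W_\ell\mid\ell)\le K(x_b\mid\ell)+K(x_a\mid x_b)+O(1)\le 2T+K(x_a\mid x_b)+O(1),
\]
using the trivial bound $K(x_b\mid\ell)\le 2T+O(1)$ (the length is known from $\ell$). Here I must be careful: chaining conditional prefix complexity as $K(W\mid\ell)\le K(x_b\mid\ell)+K(x_a\mid x_b,\ell)+O(1)$ is the easy direction of subadditivity (concatenate the two self-delimiting programs), and $\ell$ is computable from $x_b$. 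Combining this with \eqref{eq:deficiency} gives $K(x_a\mid x_b)\ge 2T-\Delta-O(1)$. For opposite sides, $W_\ell$ is computable from $x_a$, $x_b$ and $T$ supplied bits, namely $\widetilde M_{22}$ for each matrix. Hence
\[
4T-\Delta\le K(W_\ell\mid\ell)\le 2T+K(x_a\mid x_b)+T+O(1),
\]
where the supplied block of length $T$ is known from $\ell$ and so needs no delimiter. This yields $K(x_a\mid x_b)\ge T-\Delta-O(1)$.

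\textbf{Main obstacle.} The delicate point is keeping all error terms $O(1)$ rather than $O(\log T)$. The upper bound on $K(W_\ell\mid\ell)$ uses only the easy subadditivity direction, in which the inner programs are self-delimiting relative to the known $\ell$. So no symmetry of information with logarithmic loss is needed. I will verify that each literal block's length is computable from data already available: $\ell$ from the condition, and $T=r\ell$. This is what keeps every term constant. Taking $C_0$ as the maximum of all these constants completes \eqref{eq:complexity}.
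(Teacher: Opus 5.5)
Your proposal is correct and follows the paper's proof step for step. You use the same literal-data upper bounds, with lengths read off from $|x_b|$. You obtain the lower bounds in the same way, by describing $W_\ell$ given $\ell$ through a literal copy of $x_b$, a shortest program for $x_a$ given $x_b$, and, for opposite sides, the $T$ bits of the $\widetilde M_{22}$ entries. The only difference is packaging: you invoke the easy direction of prefix subadditivity together with $K(x_b\mid\ell)\le 2T+O(1)$, whereas the paper builds the concatenated program explicitly and checks that its domain is prefix-free.
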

\end{samepage}
Section~\ref{sec:finite} proves that, for each $\ell$, there are matrix data satisfying~\eqref{eq:deficiency} with $\Delta=0$. Section~\ref{sec:infinite} proves that the prefixes of one fixed infinite sequence satisfy~\eqref{eq:deficiency} with $\Delta=\Delta_\ell=O(\log(\ell+2))$.

\begin{proof}
To prove the upper bounds, we use a program that takes $x_b$ as a condition and reads additional binary data to produce $x_a$. For vertices on the same side, we supply the entire $2T$-bit string $x_a$; for vertices on opposite sides, we supply one field element per matrix, encoded in $T$ bits in total. Since $T=|x_b|/2$, the program knows the required data length in each case without a separate description of $\ell$.

In each case, for a fixed $x_b$, all valid additional binary data have the same length and therefore form a prefix code. By the optimality of $U$, there is a program $p$ that outputs $x_a$ given $x_b$, whose length is at most the length of the additional binary data plus a fixed constant.

For distinct vertices on the same side, copying the target string gives
\[
 K(x_a\mid x_b)\le2T+O(1),
\]
and the same bound holds with $a$ and $b$ interchanged.

Let $k=K(x_a\mid x_b)$, and let $p$ be a shortest prefix program that outputs $x_a$ given $x_b$, so that $|p|=k$.

We construct a program that outputs $W_\ell$ with $\ell$ as its only condition. The program stores the full $2T$-bit string $x_b$, followed by the $k$-bit program $p$. Since $r$ is fixed, it computes $2T=2r\ell$ from $\ell$ and reads the first $2T$ bits of the stored data to obtain $x_b$. It then runs $p$ with $x_b$ as the condition, obtaining $x_a$. The two strings provide the products at two distinct vertices on the same side for every matrix. Lemma~\ref{lem:recovery} therefore allows the program to reconstruct all the matrices and output their concatenated encoding $W_\ell$.

The code for carrying out these steps adds a fixed constant $C$, giving
\[
 K(W_\ell\mid\ell)\le2T+k+C.
\]
The $2T$ bits of $x_b$ are included in the program length because only $\ell$ is given as a condition. The string $x_a$ need not be stored separately: it is generated by $p$.

By assumption~\eqref{eq:deficiency}, $4T-\Delta\le K(W_\ell\mid\ell)$. Combining this with the preceding upper bound gives $k\ge2T-\Delta-C$.

For a pair of vertices on opposite sides, apply the transformation $\widetilde M_t=PM_tQ$ from Lemma~\ref{lem:recovery} to each matrix $M_1,\ldots,M_\ell$, using the same $P,Q$ for all matrices. Given the string for a right vertex, decode its products $M_1v_j,\ldots,M_\ell v_j$. The list of $(1,2)$ entries of the transformed matrices, encoded in $T$ bits, then lets us compute the products $u_i^TM_1,\ldots,u_i^TM_\ell$ and hence the string for the left vertex. In the other direction the list of $(2,1)$ entries of the transformed matrices suffices. Thus
\[
 K(x_a\mid x_b)\le T+O(1),
\]
and the same bound holds with $a$ and $b$ interchanged.

For the reverse inequality, describe $W_\ell$ conditional on $\ell$ by the $2T$ bits of $x_b$, a shortest conditional prefix program of length $k=K(x_a\mid x_b)$ for $x_a$, and the $T$ bits for the $(2,2)$ entries of the transformed matrices. Decoding $x_a,x_b$ gives the two products for each matrix; these products and the supplied $(2,2)$ entries recover all the matrices, so
\[
 4T-\Delta\le K(W_\ell\mid\ell)\le3T+k+O(1).
\]
Hence $k\ge T-\Delta-O(1)$.

Each recovery procedure receives the concatenation of $x_b$, $p$, and the additional data as a complete finite binary string, with $\ell$ given separately as the condition. The total input length is therefore available. Since $\ell$ determines $T=r\ell$, the procedure takes the first $2T$ bits as $x_b$. In the opposite-side case, it also takes the last $T$ bits as the additional data; in the same-side case, there are no additional data. The remaining bits form $p$, which is run on $U$ with $x_b$ as the condition to obtain $x_a$.

For each recovery procedure and fixed $\ell$, the data before and after $p$ have fixed lengths. For each fixed $x_b$, the programs $p$ on which $U(p,x_b)$ halts form a prefix-free set. Hence the complete inputs on which the recovery procedure halts also form a prefix-free set.

Thus a single constant $C_0$ works for all distinct vertices $a,b$, all lengths $\ell$, and all matrix data $W_\ell$ satisfying~\eqref{eq:deficiency}.
\end{proof}

\subsection{The finite-string embedding}\label{sec:finite}
\begin{proof}[Proof of Theorem~\ref{thm:main}(1)]
We now choose matrix data satisfying the hypothesis of Lemma~\ref{lem:complexity} with $\Delta=0$.
For every fixed $\ell$, there are $2^{4T}$ strings of length $4T$ and fewer than $2^{4T}$ programs of length below $4T$. Consequently, some $W_\ell$ satisfies $K(W_\ell\mid\ell)\ge4T$. Lemma~\ref{lem:complexity}, with $\Delta=0$, then gives, simultaneously for all distinct vertices,
\begin{equation}\label{eq:finitecomplexity}
 K(x_a\mid x_b)=T d_G(a,b)+O(1).
\end{equation}

To ensure injectivity at every scale, we prepend a fixed binary label, or tag, to each vertex's string. Write $\cat$ for string concatenation. Put $k_0=\ceil{\log_2\max(m,n)}$ and $B=1+k_0$, and assign the tags
\[
 \tau(L_i)=0\cat\bin_{k_0}(i-1),\qquad
 \tau(R_j)=1\cat\bin_{k_0}(j-1).
\]
Here $\bin_{k_0}$ denotes a fixed-length binary encoding; when $k_0=0$, the index string is empty. The tags have common length $B$ and are pairwise distinct. For every fixed pair of vertices,
\begin{equation}\label{eq:tags}
 K(\tau(a)\cat x_a\mid\tau(b)\cat x_b)
 =K(x_a\mid x_b)+O(1).
\end{equation}
Adding or removing the fixed tags from the condition and the output costs $O(1)$. There are only finitely many vertex pairs, so the constant is uniform.

For an integer $s\ge1$, take $\ell=\ceil{s/r}$, choose the incompressible $W_\ell$ at that length, and set $\phi_s(a)=\tau(a)\cat x_a$. This is injective and has length $B+2r\ceil{s/r}$. Since
\begin{equation}\label{eq:rounding}
 s\le T=r\ceil{s/r}<s+r
\end{equation}
and $d_G\le2$, equations~\eqref{eq:distance}, \eqref{eq:finitecomplexity} and~\eqref{eq:tags} give~\eqref{eq:finite}, with a constant independent of $s$. The diagonal case is exact. Equation~\eqref{eq:rounding} also gives~\eqref{eq:lengthbounds}.
\end{proof}

\subsection{The fixed infinite sequences}\label{sec:infinite}
Choose and fix an infinite binary sequence $\omega$ that is Martin--L\"of random for the fair-coin measure. Such sequences exist. By the Levin--Schnorr theorem~\cite[Theorem 92, p.~162]{shen}, there is a constant $c$ such that
\[
 K(\pref{\omega}{n})\ge n-c\qquad(n\ge1).
\]
Here $K(x)$ denotes unconditional prefix complexity. The standard inequality $K(x)\le K(n)+K(x\mid n)+O(1)$ follows from~\cite[Theorem 66, p.~119]{shen} by taking the pair $(n,x)$ and projecting onto $x$. Together with $K(n)=O(\log_2(n+2))$~\cite[Section 4.6, p.~110]{shen}, it gives
\[
 K(\pref{\omega}{n}\mid n)\ge n-O(\log_2(n+2)).
\]
If $\omega$ were computable, a fixed program given $n$ would output $\pref{\omega}{n}$, yielding $K(\pref{\omega}{n}\mid n)=O(1)$. This contradicts the preceding lower bound, so $\omega$ is not computable. The construction below proves existence without providing an algorithm that generates this source; it uses the complexity lower bound above.

\begin{proof}[Proof of Theorem~\ref{thm:main}(2)]
For every $\ell\ge1$, put $W_\ell=\pref{\omega}{4r\ell}$. Then $K(W_\ell)\ge4r\ell-c$. The same complexity inequality, now with condition $\ell$, gives
\[
 K(W_\ell)\le K(\ell)+K(W_\ell\mid\ell)+O(1).
\]
Using the bound $K(\ell)\le2\log_2(\ell+2)+O(1)$~\cite[Section 4.6, p.~110]{shen}, choose a constant $D\ge0$, independent of $\ell$, so that
\begin{equation}\label{eq:allprefix}
 K(W_\ell\mid\ell)\ge4r\ell-\Delta_\ell,
 \qquad \Delta_\ell:=2\log_2(\ell+2)+D.
\end{equation}

Divide $\omega$ into successive groups of $4r$ bits and decode each group as a matrix in $\F_q^{2\times2}$. At each vertex, compute its product in~\eqref{eq:observations} for each matrix, encode the product as a $2r$-bit string, and concatenate these strings in order. Prepend the fixed tag $\tau(a)$ once, at the start, to obtain the infinite sequence $\xi_a$. Its prefix of length $B+2r\ell$ is exactly $\tau(a)\cat x_a$, where $x_a$ is formed from the first $\ell$ matrices encoded by $W_\ell$.

Apply Lemma~\ref{lem:complexity} to~\eqref{eq:allprefix}, and then use~\eqref{eq:tags} and~\eqref{eq:distance}. For a constant $C_1$ independent of $\ell$,
\begin{equation}\label{eq:prefixestimate}
 \left|\dK(\pref{\xi_a}{B+2r\ell},\pref{\xi_b}{B+2r\ell})
       -r\ell d_G(a,b)\right|
 \le\Delta_\ell+C_1
\end{equation}
for all pairs, with the diagonal again exact. Take $\ell=\ceil{s/r}$. Since $s$ is a positive integer and $r\ge1$, we have $\ell\le s$ and
\[
 \Delta_\ell\le2\log_2(s+2)+D.
\]
The rounding error contributes at most $2r$ since $d_G\le2$, proving~\eqref{eq:nested}. The sequences are fixed once $\omega$ is chosen; increasing $s$ only lengthens their prefixes. The tags make these prefixes injective at every selected scale. The formula for $L(s)$ shows that the prefix length is nondecreasing in $s$. Equation~\eqref{eq:lengthbounds} gives $L(s)=2s+O(1)$, so the prefix lengths grow linearly with $s$ and tend to infinity.
\end{proof}

\section{Non-Euclidean powers of information distance}\label{sec:powers}
We prove that no positive power of information distance can be represented exactly by distances in a real Hilbert space, giving a negative answer to Hutter's Open Problem 39.

The following standard inner-product identity gives the squared-distance condition in~\cite[Section 3, equations (5)--(6), p.~526]{schoenberg}. The calculation for $K_{N,N}$ below is the special case $m=n=N$ of~\cite[Theorem 36]{hutter}.

\begin{lemma}\label{lem:hilbert}
If $z_1,\ldots,z_t$ belong to a real Hilbert space $H$ and $\lambda_1,\ldots,\lambda_t\in\R$ satisfy $\sum_i\lambda_i=0$, then
\begin{equation}\label{eq:hilbert}
 \sum_{i,j}\lambda_i\lambda_j\|z_i-z_j\|_H^2
 =-2\left\|\sum_i\lambda_i z_i\right\|_H^2\le0.
\end{equation}
\end{lemma}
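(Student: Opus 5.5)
The plan is to expand the squared norms using the inner product and let the constraint $\sum_i\lambda_i=0$ cancel the diagonal terms. For all $i,j$ we have
\[
 \|z_i-z_j\|_H^2=\|z_i\|_H^2+\|z_j\|_H^2-2\langle z_i,z_j\rangle .
\]
Multiply by $\lambda_i\lambda_j$ and sum over all pairs $(i,j)$, treating the three terms separately.

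For the first term, the sum factors as
\[
 \sum_{i,j}\lambda_i\lambda_j\|z_i\|_H^2=\Bigl(\sum_j\lambda_j\Bigr)\Bigl(\sum_i\lambda_i\|z_i\|_H^2\Bigr),
\]
which is zero because $\sum_j\lambda_j=0$. By symmetry the second term also vanishes.

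For the cross term, bilinearity of the real inner product gives
\[
 \sum_{i,j}\lambda_i\lambda_j\langle z_i,z_j\rangle=\Bigl\langle\sum_i\lambda_i z_i,\sum_j\lambda_j z_j\Bigr\rangle=\Bigl\|\sum_i\lambda_i z_i\Bigr\|_H^2 .
\]
Combining the three terms gives the identity \eqref{eq:hilbert}. The inequality follows because a norm is nonnegative.

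No step is a genuine obstacle. The only point needing care is that the sums are finite, so they can be rearranged freely, and that the space is real, so the inner product is symmetric and bilinear.
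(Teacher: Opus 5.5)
Your proof is correct: expanding $\|z_i-z_j\|_H^2$ through the inner product, the two diagonal sums vanish because $\sum_j\lambda_j=0$, and bilinearity collapses the cross term to $-2\|\sum_i\lambda_i z_i\|_H^2$. The paper gives no separate proof and simply invokes this as the standard inner-product identity (citing Schoenberg), so your computation is exactly the argument it relies on.
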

Thus a set of proposed distances cannot be represented in a real Hilbert space if some real weights summing to zero make the weighted sum of squared distances positive.

\begin{samepage}
\begin{lemma}[Hutter's balanced bipartite obstruction]\label{lem:obstruction}
Let $N$ be a positive integer. For $G=K_{N,N}$, assign weight $1/N$ to each left vertex and $-1/N$ to each right vertex. These real weights satisfy $\sum_{a\in V(G)}\lambda_a=0$. In the following sum, $a,b$ range independently over all vertices on both sides. For every $\alpha>0$,
\begin{equation}\label{eq:obstruction}
 S_N(\alpha):=\sum_{a,b}\lambda_a\lambda_b d_G(a,b)^{2\alpha}
 =2(1-1/N)4^\alpha-2.
\end{equation}
In particular, $S_N(\alpha)>0$ whenever
\begin{equation}\label{eq:Nchoice}
 N>\frac{1}{1-4^{-\alpha}}.
\end{equation}
\end{lemma}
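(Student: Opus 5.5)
The plan is to evaluate $S_N(\alpha)$ directly by splitting the double sum according to the position of the ordered pair $(a,b)$, and then to read off the positivity condition.

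First, the weights sum to zero, since there are $N$ left vertices with weight $1/N$ and $N$ right vertices with weight $-1/N$, giving $1-1=0$. Next, I split the ordered pairs $(a,b)$ into three classes: diagonal pairs, same-side pairs with $a\ne b$, and opposite-side pairs. Diagonal pairs contribute nothing because $d_G(a,a)=0$ and $0^{2\alpha}=0$ for $\alpha>0$; this is the only place positivity of $\alpha$ enters the identity.

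For same-side pairs, $\lambda_a\lambda_b=1/N^2$ on either side and $d_G=2$. Each side has $N(N-1)$ ordered pairs of distinct vertices, so the two sides together contribute
\[
2N(N-1)\cdot N^{-2}\cdot 2^{2\alpha}=2(1-1/N)4^\alpha .
\]
For opposite-side pairs, $\lambda_a\lambda_b=-1/N^2$ and $d_G=1$. There are $2N^2$ ordered pairs, which contribute $-2$. Adding the three classes gives \eqref{eq:obstruction}.

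Finally, $S_N(\alpha)>0$ holds exactly when $(1-1/N)4^\alpha>1$, that is, when $1-1/N>4^{-\alpha}$. This is equivalent to $1/N<1-4^{-\alpha}$. Since $\alpha>0$, the quantity $1-4^{-\alpha}$ is positive, so this rearranges to \eqref{eq:Nchoice}.

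There is no real obstacle here. The only points needing care are counting the pairs as ordered pairs, which produces the factor $2$ in both terms, and noting that the diagonal vanishes.
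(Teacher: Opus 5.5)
Your proof is correct and is essentially the paper's argument made explicit. The paper cites Hutter's Theorem 36 calculation with $m=n=N$ for $N\ge2$ and checks $N=1$ separately; your ordered-pair count carries out that calculation directly, covers $N=1$ automatically because the same-side term then has $N(N-1)=0$ pairs, and finishes with the same rearrangement to \eqref{eq:Nchoice} using $1-4^{-\alpha}>0$.
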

\end{samepage}
\begin{proof}
For $N\ge2$, setting $m=n=N$ in the calculation in~\cite[Theorem 36]{hutter} gives~\eqref{eq:obstruction}; for $N=1$, both sides equal $-2$. Solving $S_N(\alpha)>0$ for $N$ gives~\eqref{eq:Nchoice}.
\end{proof}

\begin{proof}[Proof of Corollary~\ref{cor:allpowers}]
Fix $\alpha>0$ and then an integer $N$ satisfying~\eqref{eq:Nchoice}. Apply Theorem~\ref{thm:main}(1) to $K_{N,N}$. Its embeddings satisfy
\[
 \frac{\dK(\phi_s(a),\phi_s(b))}{s}=d_G(a,b)+O(1/s).
\]
For the zero-sum weights of Lemma~\ref{lem:obstruction}, put
\[
 Q_s=\sum_{a,b}\lambda_a\lambda_b
 \left(\frac{\dK(\phi_s(a),\phi_s(b))}{s}\right)^{2\alpha}.
\]
Continuity of the positive power function on the nonnegative reals and finiteness of the sum give
\begin{equation}\label{eq:positive-limit}
 Q_s\longrightarrow S_N(\alpha)>0.
\end{equation}
Suppose that $F:\bits^*\to H$ satisfied~\eqref{eq:representation}. Here $a$ is a graph vertex, $\phi_s(a)$ is its binary string, and $F(\phi_s(a))$ is a vector in $H$. Define the rescaled vectors
\[
 z_{s,a}=\frac{F(\phi_s(a))}{s^\alpha}.
\]
This divides the Hilbert-space vectors by the positive real number $s^\alpha$. Their squared distances are
\[
 \|z_{s,a}-z_{s,b}\|_H^2
 =\frac{\dK(\phi_s(a),\phi_s(b))^{2\alpha}}{s^{2\alpha}}
 =\left(\frac{\dK(\phi_s(a),\phi_s(b))}{s}\right)^{2\alpha}.
\]
The weights are $\lambda_a=1/N$ for a left vertex and $\lambda_a=-1/N$ for a right vertex; the symbol $\lambda_b$ denotes the weight of the second vertex in the same family. Lemma~\ref{lem:hilbert} therefore gives
\[
 Q_s=\sum_{a,b}\lambda_a\lambda_b\|z_{s,a}-z_{s,b}\|_H^2
 =-2\left\|\sum_a\lambda_a z_{s,a}\right\|_H^2\le0.
\]
This contradicts~\eqref{eq:positive-limit}.
\end{proof}

For each fixed $\alpha>0$, we choose $N$ and keep it fixed as $s\to\infty$. The embedding constant $C_{\rm fin}$ may depend on $N$, but does not depend on $s$, so $C_{\rm fin}/s\to0$. For $\alpha>1$, the conclusion concerns the representation~\eqref{eq:representation}, whether or not $\dK^\alpha$ satisfies the triangle inequality. We follow Hutter's use of ``Euclidean'' to include real Hilbert spaces.

\begin{proof}[Proof of Corollary~\ref{cor:prefix}]
Fix $\alpha$ and choose $N$ as in~\eqref{eq:Nchoice}. Use the fixed sequences supplied by Theorem~\ref{thm:main}(2). Their normalized distances satisfy
\[
 \frac{\dK(\psi_s(a),\psi_s(b))}{s}
 =d_G(a,b)+O\left(\frac{\log_2(s+2)}{s}\right)
 \longrightarrow d_G(a,b).
\]
The corresponding forms $Q_s$ again converge to $S_N(\alpha)>0$. For every sufficiently large integer $s$, we have $Q_s>0$. If the prefixes at that scale could be represented by points in a real Hilbert space with pairwise distances $\dK^\alpha$, Lemma~\ref{lem:hilbert} would give $Q_s\le0$, a contradiction. Therefore, no such representation exists.
\end{proof}

For each fixed $\alpha>0$, we choose an integer $N$ and an infinite binary sequence $\xi_a$ for each vertex $a$ of $K_{N,N}$. As the scale $s$ varies, $N$ and these sequences remain fixed; only the length $L(s)$ of the selected prefixes changes.

At each scale $s$, we allow a proposed representation to use its own real Hilbert space and its own map from the prefixes to points in that space. The conclusion rules out every such representation with pairwise distances $\dK^\alpha$ for all sufficiently large $s$. The conclusion applies to the selected lengths $L(s)$.

Hutter's $K_{3,3}$ sequence construction gives this conclusion for $\alpha>\alpha_0$. Our construction also covers the remaining range $0<\alpha\le\alpha_0$, and thus gives the conclusion for every $\alpha>0$.

\section{Non-embeddability of integer lattices}\label{sec:intervals}
The constants in this section depend only on $U$, the fixed rule for encoding ordered pairs of strings, and $c_{\rm ref}$. They do not depend on the endpoint strings $x,y$. Fix a computable pairing $\langle x,y\rangle$ with computable projections, and write $K(x,y\mid t)$ for $K(\langle x,y\rangle\mid t)$.

\subsection{Two prefix-complexity estimates}
The conditional chain rule for prefix complexity~\cite[Section 4.7.3, Problem 112, p.~123]{shen} gives
\begin{equation}\label{eq:precise-chain}
 K(x,y\mid t)=K(x\mid t)+K(y\mid t,x,K(x\mid t))+O(1).
\end{equation}
The integer $K(x\mid t)$ is supplied as part of the auxiliary input; it is needed for the $O(1)$ remainder. Exchanging the entries of a pair also changes complexity by at most $O(1)$.

We shall also use
\begin{equation}\label{eq:conditional-compose}
 K(w,y\mid x)\le K(w\mid x)+K(y\mid w)+C_{\rm comp}.
\end{equation}
To obtain~\eqref{eq:conditional-compose}, apply~\eqref{eq:precise-chain} to get
\[
 K(w,y\mid x)=K(w\mid x)+K(y\mid x,w,K(w\mid x))+O(1).
\]
A program that outputs $y$ given $w$ may ignore the additional conditions $x$ and $K(w\mid x)$, so
\[
 K(y\mid x,w,K(w\mid x))\le K(y\mid w)+O(1).
\]
Combining the two fixed error terms gives~\eqref{eq:conditional-compose}, with $C_{\rm comp}$ independent of $x,w,y$.

\subsection{Uniform interval counting}
For $x,y\in\bits^*$ and $\delta\ge0$, define
\begin{equation}\label{eq:interval-def}
 I_\delta(x,y)=\{w\in\bits^*: \dK(x,w)+\dK(w,y)\le\dK(x,y)+\delta\}.
\end{equation}
This is the rough interval, or $\delta$-interval, of the metric $\dK$; see~\cite[Definition 1.1, p.~2926]{chatterji}.

A related counting argument appears in Romashchenko~\cite[Lemma 1, p.~5]{romashchenko}: many intermediate strings satisfying two prescribed bounds on conditional program lengths yield a shorter description between the endpoints. That result uses plain complexity and a logarithmic remainder depending on these bounds. Here the conditional prefix chain rule gives a bound depending only on the excess $\delta$ and a fixed constant.

\begin{lemma}\label{lem:interval}
There is a constant $C\ge0$ such that, for all $x,y\in\bits^*$ and all real $\delta\ge0$,
\begin{equation}\label{eq:interval-card}
 |I_\delta(x,y)|\le 2^{\delta+C}.
\end{equation}
In particular, every such interval is finite.
\end{lemma}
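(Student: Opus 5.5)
The plan is to turn membership in $I_\delta(x,y)$ into an upper bound on the complexity of $w$ given a condition that depends only on $x$ and $y$. Such $w$ are few because there are fewer than $2^{k+1}$ programs of length at most $k$. The tools are the conditional chain rule~\eqref{eq:precise-chain} and the composition bound~\eqref{eq:conditional-compose}.

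First I dispose of degenerate cases. The endpoints $w=x$ and $w=y$ contribute at most $2$ elements, which is absorbed into $C$. If $x=y$, the condition reads $2\dK(x,w)\le\delta$. For $w\ne x$ this gives $K(w\mid x)\le\dK(x,w)\le\delta/2$, so there are at most $2^{\delta/2+1}$ such $w$, which is enough. So assume $x\ne y$ and $w\notin\{x,y\}$. Both the definition~\eqref{eq:distance} and~\eqref{eq:interval-def} are symmetric in $x,y$. Hence I may assume without loss of generality that $K(y\mid x)\ge K(x\mid y)$, so that $\dK(x,y)=K(y\mid x)+c_{\rm ref}$.

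Next I bound the complexity of the pair. For $w\in I_\delta(x,y)\setminus\{x,y\}$, keeping only one term of each maximum gives
\[
 K(w\mid x)+K(y\mid w)+2c_{\rm ref}\le\dK(x,w)+\dK(w,y)\le K(y\mid x)+c_{\rm ref}+\delta .
\]
Since $c_{\rm ref}\ge0$, this yields $K(w\mid x)+K(y\mid w)\le K(y\mid x)+\delta$. Feeding this into~\eqref{eq:conditional-compose} gives
\[
 K(w,y\mid x)\le K(y\mid x)+\delta+C_{\rm comp}.
\]
Now I apply~\eqref{eq:precise-chain} to the swapped pair $\langle y,w\rangle$ with condition $t=x$; swapping the entries of the pair costs only $O(1)$. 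This gives
\[
 K(w,y\mid x)=K(y\mid x)+K\bigl(w\mid x,y,K(y\mid x)\bigr)+O(1).
\]
The term $K(y\mid x)$ cancels, and so $K(w\mid x,y,K(y\mid x))\le\delta+C'$. The constant $C'$ depends only on $U$, the pairing and $c_{\rm ref}$.

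Finally I count. The condition $z=\langle x,y,K(y\mid x)\rangle$ is fixed once $x,y$ are fixed. Distinct $w$ need distinct programs under this fixed condition, and there are fewer than $2^{\lfloor\delta+C'\rfloor+1}$ programs of length at most $\delta+C'$. Adding the two endpoints gives $|I_\delta(x,y)|\le 2^{\delta+C'+1}+2\le 2^{\delta+C}$ for a suitable $C$.

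The main obstacle is making sure the condition does not depend on $w$. This is exactly why the chain rule must be applied with $y$ as the first component, so that the auxiliary integer is $K(y\mid x)$ rather than $K(w\mid x)$. Applying the rule with $w$ first would leave a condition that varies with $w$ and break the counting. A second point to check is that the WLOG step really lets me replace the max in $\dK(x,y)$ by $K(y\mid x)$; the symmetry of $I_\delta$ in its endpoints justifies this.
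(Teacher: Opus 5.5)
Your proof is correct and follows essentially the same route as the paper: reduce by symmetry to $K(y\mid x)\ge K(x\mid y)$, bound $K(w,y\mid x)$ via the composition estimate~\eqref{eq:conditional-compose}, then apply the chain rule~\eqref{eq:precise-chain} with $y$ as the first component so that the condition $(x,y,K(y\mid x))$ does not depend on $w$, and count programs. The only difference is cosmetic: you set aside the endpoints $w\in\{x,y\}$ and use $\dK\ge K+c_{\rm ref}$ directly for distinct strings, whereas the paper keeps them and uses the identity-program bound $K(x\mid x)\le C_{\rm id}$ to get $K(x\mid y)\le\dK(x,y)+C_{\rm id}$ also on the diagonal.
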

\begin{proof}
A fixed identity program gives $K(x\mid x)\le C_{\rm id}$ uniformly. Together with~\eqref{eq:distance}, this gives
\begin{equation}\label{eq:direction-distance}
 K(x\mid y)\le\dK(x,y)+C_{\rm id}
\end{equation}
also on the diagonal. Suppose first that $x\ne y$. Since $I_\delta(x,y)=I_\delta(y,x)$, interchange $x$ and $y$, if necessary, so that $k:=K(y\mid x)\ge K(x\mid y)$. Then $\dK(x,y)=k+c_{\rm ref}$. For every $w\in I_\delta(x,y)$, equations~\eqref{eq:conditional-compose} and~\eqref{eq:direction-distance} yield, with a uniform nonnegative $C_1$,
\[
 K(w,y\mid x)\le\dK(x,w)+\dK(w,y)+O(1)
 \le k+\delta+C_1.
\]
The lower direction of~\eqref{eq:precise-chain}, with the fixed cost of exchanging the pair included, gives a uniform nonnegative $C_2$ such that
\[
 k+K(w\mid x,y,k)\le K(w,y\mid x)+C_2.
\]
Consequently
\begin{equation}\label{eq:interval-description}
 K(w\mid x,y,k)\le\delta+C_3,\qquad C_3=C_1+C_2.
\end{equation}
Once the strings $x,y$ are fixed, $k=K(y\mid x)$ is also fixed. Thus, the programs producing the strings $w\in I_\delta(x,y)$ all receive the same condition $(x,y,k)$. There are $2^{N+1}-1$ binary strings of length at most $N=\lfloor\delta+C_3\rfloor$, each of which can describe at most one output under this condition. Hence
\[
 |I_\delta(x,y)|\le2^{N+1}-1
 =2^{\lfloor\delta+C_3\rfloor+1}-1
 \le2^{\delta+C_3+1}.
\]
Taking $C\ge C_3+1$ gives $|I_\delta(x,y)|\le2^{\delta+C}$ when $x\ne y$.

If $x=y$, membership in the interval implies $2\dK(x,w)\le\delta$. By~\eqref{eq:direction-distance}, $K(w\mid x)\le\delta/2+C_{\rm id}$. The same program count gives $|I_\delta(x,x)|\le2^{\delta/2+C_{\rm id}+1}\le2^{\delta+C_{\rm id}+1}$. Taking $C=\max\{C_3+1,C_{\rm id}+1\}$ gives~\eqref{eq:interval-card} in both cases.
\end{proof}

\subsection{From the integer line to integer lattices}
To rule out an embedding of the integer line $\mathbb Z$, it suffices to consider its nonnegative part $\mathbb N_0=\{0,1,2,\ldots\}$. Any map on $\mathbb Z$ satisfying the required distance estimate would restrict to a map on $\mathbb N_0$ with the same scale and uniform error. The following theorem rules out such a map already at any fixed positive scale.

\begin{theorem}\label{thm:noray}
There do not exist a map $f:\mathbb N_0\to\bits^*$, a real $s>0$ and a finite real $E\ge0$ such that
\begin{equation}\label{eq:ray-error}
 |\dK(f(i),f(j))-s|i-j||\le E
 \qquad(i,j\in\mathbb N_0).
\end{equation}
Neither computability nor injectivity of $f$ is assumed.
\end{theorem}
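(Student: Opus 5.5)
Suppose for contradiction that $f$, $s>0$ and $E$ satisfy \eqref{eq:ray-error}. The plan is to find a single rough interval of the metric $\dK$, with a fixed excess $\delta$, that contains arbitrarily many distinct points $f(i)$; this contradicts the uniform finite bound of Lemma~\ref{lem:interval}.

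First I will choose the interval. For integers $0\le i\le n$, the estimate \eqref{eq:ray-error} gives
\[
 \dK(f(0),f(i))+\dK(f(i),f(n))\le si+s(n-i)+2E=sn+2E\le\dK(f(0),f(n))+3E .
\]
With $\delta=3E$, every $f(i)$ with $0\le i\le n$ therefore lies in $I_{3E}(f(0),f(n))$. Lemma~\ref{lem:interval} then bounds the number of distinct strings among $f(0),\ldots,f(n)$ by $2^{3E+C}$, and this bound does not depend on $n$.

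Second, I must show that these points contain many distinct values, since injectivity is not assumed. If $f(i)=f(j)$, then $\dK(f(i),f(j))=0$, and \eqref{eq:ray-error} forces $s|i-j|\le E$. So equal values can only occur between indices at distance at most $E/s$. Taking the indices $i=0,g,2g,\ldots$ with $g=\lfloor E/s\rfloor+1$ gives pairwise distinct strings. Among $0,\ldots,n$ there are about $n/g+1$ such indices, which tends to infinity as $n$ grows. Choosing $n$ with $n/g+1>2^{3E+C}$ gives the contradiction.

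No step looks hard, because the work was done in Lemma~\ref{lem:interval}. The only point to check is that its constant $C$ is independent of the endpoints $f(0)$ and $f(n)$, which the lemma states explicitly. The degenerate case $f(0)=f(n)$ is also covered, since the lemma handles $x=y$.
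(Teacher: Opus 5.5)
Your proposal is correct and matches the paper's proof: you use the same inclusion $\{f(0),\ldots,f(n)\}\subseteq I_{3E}(f(0),f(n))$ together with Lemma~\ref{lem:interval}. The only difference is how non-injectivity is handled, and it is cosmetic. The paper bounds the number of indices in $\{0,\ldots,n\}$ sharing one image by $\lfloor E/s\rfloor+1$ and counts all $n+1$ indices, while you restrict to the pairwise-distinct images at indices spaced by $g=\lfloor E/s\rfloor+1$.
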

\begin{proof}
Put $x_i=f(i)$ and fix any integer $n\ge1$. For $0\le i\le n$,
\[
 \dK(x_0,x_i)+\dK(x_i,x_n)\le si+E+s(n-i)+E=sn+2E,
\]
whereas $\dK(x_0,x_n)\ge sn-E$. Thus
\[
 \{x_0,\ldots,x_n\}\subseteq I_{3E}(x_0,x_n),
 \qquad |\{x_0,\ldots,x_n\}|\le2^{3E+C}.
\]
If $x_i=x_j$, then~\eqref{eq:ray-error} gives $s|i-j|\le E$. The indices corresponding to any one string lie in an integer interval of length at most $\lfloor E/s\rfloor$, starting at the least such index. Each string therefore occurs at most $R:=\lfloor E/s\rfloor+1$ times. Counting the $n+1$ indices by their images gives
\begin{equation}\label{eq:finite-chain-bound}
 n+1\le \left(\left\lfloor E/s\right\rfloor+1\right)2^{3E+C}.
\end{equation}
The right side is fixed independently of $n$, which is impossible for arbitrarily large $n$.
\end{proof}

\begin{proof}[Proof of Corollary~\ref{cor:lattice}]
Suppose there is a string scale-embedding $\Phi_s:\mathbb Z^m\to\bits^*$. Fix a positive integer scale $s$ at which its error estimate holds. At this scale there is a finite $E(s)\ge0$, independent of the lattice points, such that
\[
 \left|\dK(\Phi_s(a),\Phi_s(b))-s\|a-b\|_1\right|\le E(s)
 \qquad(a,b\in\mathbb Z^m).
\]
Define a new map on the nonnegative integers by
\[
 \psi_s:\mathbb N_0\to\bits^*,\qquad
 \psi_s(n)=\Phi_s((n,0,\ldots,0)).
\]
For all $i,j\in\mathbb N_0$,
\[
 \|(i,0,\ldots,0)-(j,0,\ldots,0)\|_1=|i-j|,
\]
and substituting these lattice points into the assumed estimate gives
\[
 \left|\dK(\psi_s(i),\psi_s(j))-s|i-j|\right|\le E(s).
\]
Thus the assumed lattice embedding would give a map of the integer ray with the same positive scale and finite uniform error. Theorem~\ref{thm:noray} rules out precisely such a map, giving a contradiction. No map can satisfy the required distance estimate.

A sequence scale-embedding gives the same contradiction. At the fixed scale $s$, take the selected finite prefix at each lattice point to obtain $\Phi_s$, then define $\psi_s$ as above. The prefix lengths may depend on the lattice point; this does not affect the distance estimate or its restriction.
\end{proof}

\begin{remark}\label{rem:lattice-scope}
The use of prefix $K$ is essential to the present argument: the logarithmic comparison errors with the optional monotone complexity $K_m$ in~\cite[Assumption 10]{hutter} need not be uniform over an unbounded source. The interval proof uses no triangle inequality and holds for every fixed $c_{\rm ref}\ge0$.
\end{remark}

\begin{corollary}\label{cor:realspace}
For every integer $m\ge1$, $(\mathbb R^m,\|\cdot\|_1)$ admits neither a string scale-embedding nor a sequence scale-embedding into $B_K^*$.
\end{corollary}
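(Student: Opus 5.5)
The plan is to reduce Corollary~\ref{cor:realspace} to Corollary~\ref{cor:lattice}, or more directly to Theorem~\ref{thm:noray}, by restricting along the integer lattice. Suppose, for contradiction, that $(\mathbb R^m,\|\cdot\|_1)$ has a string scale-embedding $\Phi_s:\mathbb R^m\to\bits^*$. Its defining estimate holds for all pairs of points of $\mathbb R^m$, with constants $A,C$ independent of the points and of $s$. In particular it holds for all pairs of points of $\mathbb Z^m\subset\mathbb R^m$. The restriction of the $\ell_1$ metric on $\mathbb R^m$ to $\mathbb Z^m$ is exactly the metric $\|\cdot\|_1$ of the lattice.

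The first step is therefore to observe that the restrictions $\Phi_s|_{\mathbb Z^m}$ form a string scale-embedding of $(\mathbb Z^m,\|\cdot\|_1)$ with the same constants. Corollary~\ref{cor:lattice} says that no such embedding exists, which gives the contradiction.

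For sequence scale-embeddings the argument is the same. The fixed sequences $\xi_a$ for $a\in\mathbb Z^m$ and the length functions $L_a(s)$ are simply restricted to the lattice points. Monotonicity of $L_a$ and any growth condition are inherited, so the restriction is a sequence scale-embedding of the lattice. The proof of Corollary~\ref{cor:lattice} already handles this case, and it does not even need those length conditions.

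Alternatively, one can bypass the lattice statement and fix a single scale $s$ at which the estimate holds. Set $E=A\log_2(s+2)+C$ and define $f(n)=\Phi_s((n,0,\ldots,0))$ for $n\in\mathbb N_0$. Then $|\dK(f(i),f(j))-s|i-j||\le E$, which Theorem~\ref{thm:noray} rules out.

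There is no real obstacle here. The only point needing care is to check that the definition of a scale-embedding is hereditary under restriction to a subspace with the induced metric. This holds because the error bound is uniform over all point pairs, and the remark after the definitions allows estimates only at large scales and arbitrary prefix lengths.
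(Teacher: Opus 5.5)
Your proposal is correct and is the paper's proof. You restrict the proposed string embedding, or the assigned sequences and prefix lengths in the sequence case, to $\mathbb Z^m\subseteq\mathbb R^m$; this restriction preserves $\ell_1$ distances and all error and length conditions, so Corollary~\ref{cor:lattice} gives the contradiction. Your shortcut via a coordinate ray and Theorem~\ref{thm:noray} is exactly what the paper does inside its proof of Corollary~\ref{cor:lattice}, so it is equally valid.
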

\begin{proof}
The coordinatewise inclusion $\mathbb Z^m\subseteq\mathbb R^m$ preserves $\ell_1$ distances. Restricting any proposed string embedding of $\mathbb R^m$ to this subset would therefore give one of $\mathbb Z^m$, with the same scales and uniform error constants, contrary to Corollary~\ref{cor:lattice}. For a sequence embedding, restrict the assigned sequences and their prefix lengths to the integer points; all distance and length conditions are preserved.
\end{proof}
This corollary shows that the boundedness hypothesis in~\cite[Corollary 24]{hutter} cannot be dropped for the whole real $\ell_1$ space. It is a consequence of the lattice obstruction and does not assert non-embeddability for every unbounded subset.

\section{Concluding remarks}\label{sec:conclusion}
Our finite-field construction gives scale-embeddings of every finite complete bipartite graph into $B_K^*$, with constant error for finite strings and logarithmic error for prefixes of fixed infinite sequences. Combined with Hutter's Theorem 36~\cite{hutter}, these embeddings show that no positive power of information distance can be represented exactly by distances in a real Hilbert space. The interval bound also rules out scale-embeddings of the whole integer line into $B_K^*$: even at a single positive scale, no finite error bound can hold for all integer pairs.

\section*{Acknowledgements}
This work was developed through an iterative process of discussion and revision between the author and ChatGPT. The author developed the proof ideas through these discussions, and ChatGPT wrote the mathematical proofs based on those ideas. The author carefully reviewed the mathematical proofs to verify their correctness and ensure that they accurately reflected the author's intended arguments.

The accompanying Lean formalization was written by ChatGPT. The author checked that the hypotheses and conclusions of the main formalized theorems correspond to those in the paper. The formal proofs were checked by the Lean kernel.

The author takes full responsibility for the paper and the accompanying code.

\appendix
\section{Scope of formal verification}
The main results are formalized in Lean 4.33.1, using mathlib 4.33.1 and Alexey Milovanov's library for conditional Kolmogorov complexity. The code is provided as supplementary material. It uses library results for standard complexity estimates and the Hilbert-space identity; the matrix construction, coding bounds and non-embedding arguments are proved in the accompanying code. The complexity library is fixed to commit \texttt{97c6dfc} of \url{https://github.com/AlexeyMilovanov/kolmogorov-complexity-lean}.

The theorem \texttt{library\_bipartite\_embedding\_main} establishes both parts of Theorem~\ref{thm:main}, using the library's definition of optimal conditional prefix complexity on binary strings represented as \texttt{List Bool}. It includes the same positive parameters $r,B$ for both parts, the exact lengths~\eqref{eq:length}, injectivity, both error bounds, and the monotonicity and linear growth of the selected prefix lengths. The formal statement also covers $s=0$; restricting it to $s\ge1$ gives the result stated in the paper.

The Hilbert-space conclusions and the identification of graph distance are formalized as follows:
\begin{itemize}
\item \texttt{library\_information\_distance\_nonembedding} proves Corollary~\ref{cor:allpowers};
\item \texttt{library\_infinite\_prefix\_nonembedding} proves Corollary~\ref{cor:prefix}, including $N>1$, fixed infinite binary sequences and the exact common prefix lengths;
\item \texttt{graphDistance\_eq\_shortestPath} identifies the explicit distance taking values $0,1,2$ with mathlib's shortest-path distance on the complete bipartite graph, assuming both sides are nonempty.
\end{itemize}
The infinite-source construction follows Section~\ref{sec:infinite}. It uses the library's existence theorem for Martin--L\"of random sequences and its Levin--Schnorr theorem, then establishes the logarithmic cost of conditioning on the block count. The theorem \texttt{nested\_source\_martinLof} shows that the packed matrix blocks are prefixes of the chosen random sequence, and \texttt{martinLof\_source\_not\_computable} derives its noncomputability from the complexity lower bound. As in Section~\ref{sec:powers}, the proof of Corollary~\ref{cor:allpowers} uses the finite-string embeddings and their constant error.

For the interval argument, \texttt{library\_\allowbreak information\_\allowbreak intervals\_\allowbreak real\_\allowbreak bound} proves Lemma~\ref{lem:interval} for the same \texttt{libraryInformationDistance} used in the embedding theorems. It applies the exact conditional chain rule with the complexity value retained in the auxiliary input. The bound on the number of strings with bounded conditional prefix complexity comes from the library. The composition estimate follows from the upper bound in the conditional chain rule and the option to ignore extra conditions, with a fixed cost for exchanging the components of an encoded pair.

The subsequent ray argument includes the bound on repeated images. The string and sequence conclusions of Corollary~\ref{cor:lattice} are proved by \texttt{library\_\allowbreak no\_\allowbreak lattice\_\allowbreak string\_\allowbreak scale\_\allowbreak embedding} and \texttt{library\_\allowbreak no\_\allowbreak lattice\_\allowbreak sequence\_\allowbreak scale\_\allowbreak embedding}. These theorems impose neither computability nor injectivity assumptions on the proposed embeddings, and allow the error estimate to hold only at sufficiently large positive integer scales.

The corresponding results for Corollary~\ref{cor:realspace} are \texttt{library\_\allowbreak no\_\allowbreak real\_\allowbreak string\_\allowbreak scale\_\allowbreak embedding} and \texttt{library\_\allowbreak no\_\allowbreak real\_\allowbreak sequence\_\allowbreak scale\_\allowbreak embedding}. The real $\ell_1$ distance is defined as the finite sum of absolute coordinate differences. Both proofs restrict a proposed embedding to integer-coordinate points and apply the lattice results, without any additional complexity estimate.

Some encoding choices differ from those in the paper. The formalization uses fixed finite bijections to encode field elements, pairs and matrices in $r$, $2r$ and $4r$ bits, respectively, in place of basis-coordinate encodings. For each fixed choice of encodings, the corresponding decoders and block transformations are computable. Their input and output lengths match those used in the proofs. Vertex labels are assigned by an injection into binary strings of a fixed length. A length-preserving conversion relates the internal string representation to the library's \texttt{List Bool} representation, and the corresponding conditional complexities are proved equal. The existence of an optimal machine is supplied by the library.

The nonrepresentation theorems hold for every nonnegative reference constant and do not use the triangle inequality. The fact that a sufficiently large reference constant makes~\eqref{eq:distance} a metric is taken from Hutter and is not formalized here. The formalization uses prefix $K$ throughout; monotone $K_m$ is outside its scope.


\begin{thebibliography}{9}
\bibitem{bennett}
Bennett CH, G\'acs P, Li M, et al.
Information distance.
\emph{IEEE Trans Inf Theory} 1998; \textbf{44}(4): 1407--1423.
\href{https://doi.org/10.1109/18.681318}{doi:10.1109/18.681318}.

\bibitem{hutter}
Hutter M.
Properties of algorithmic information distance.
\emph{IEEE Trans Inf Theory} 2025; \textbf{71}(10): 7540--7554.
\href{https://doi.org/10.1109/TIT.2025.3597092}{doi:10.1109/TIT.2025.3597092}.
Definition, assumption, result and problem numbers in the present article refer to arXiv:2507.21988v1,
\url{https://arxiv.org/abs/2507.21988v1}.

\bibitem{shen}
Shen A, Uspensky VA and Vereshchagin N.
\emph{Kolmogorov Complexity and Algorithmic Randomness}.
Providence, RI: American Mathematical Society, 2017.
Section, theorem and problem numbers and page references refer to the authors' 2016 draft,
\url{https://www.lirmm.fr/~ashen/kolmbook-eng.pdf}.

\bibitem{schoenberg}
Schoenberg IJ.
Metric spaces and positive definite functions.
\emph{Trans Amer Math Soc} 1938; \textbf{44}(3): 522--536.
\href{https://doi.org/10.1090/S0002-9947-1938-1501980-0}{doi:10.1090/S0002-9947-1938-1501980-0}.

\bibitem{chatterji}
Chatterji I and Dru\c{t}u C.
Median geometry for spaces with measured walls and for groups.
\emph{Math Ann} 2025; \textbf{393}: 2925--2952.
\href{https://doi.org/10.1007/s00208-025-03289-1}{doi:10.1007/s00208-025-03289-1}.

\bibitem{romashchenko}
Romashchenko A.
Clustering with respect to the information distance.
\emph{Theor Comput Sci} 2022; \textbf{929}: 164--171.
\href{https://doi.org/10.1016/j.tcs.2022.06.039}{doi:10.1016/j.tcs.2022.06.039}.
The lemma number and page reference refer to arXiv:2110.01346v4,
\url{https://arxiv.org/abs/2110.01346v4}.
\end{thebibliography}
\end{document}